\newtheorem{thm}{Theorem}%[section]
\newtheorem{cor}[thm]{Corollary}
\newtheorem{prop}[thm]{Proposition}
\theoremstyle{definition}
\newtheorem{defn}[thm]{Definition}
\theoremstyle{remark}
\newtheorem{rem}[thm]{Remark}
\newcommand{\mb}{\mathbf}
\begin{document}

\title{Resolvability in $E_{\gamma}$ with Applications to Lossy Compression and Wiretap Channels}%
%RESOLVABILITY IN $E_{\gamma}$ WITH APPLICATIONS TO LOSSY COMPRESSION AND WIRETAP CHANNELS
\author{\IEEEauthorblockN{Jingbo Liu~~~~~~~~~Paul Cuff~~~~~~~~Sergio Verd\'{u}}
\IEEEauthorblockA{Dept. of Electrical Eng., Princeton University, NJ 08544\\
\{jingbo,cuff,verdu\}@princeton.edu}}%
\maketitle

\begin{abstract}
We study the amount of randomness needed for an input process to approximate a given output distribution of a channel in the $E_{\gamma}$ distance. A general one-shot achievability bound for the precision of such an approximation is developed. In the i.i.d.~setting where $\gamma=\exp(nE)$, a (nonnegative) randomness rate above $\inf_{Q_{\sf U}: D(Q_{\sf X}||\pi_{\sf X})\le E} \{D(Q_{\sf X}||\pi_{\sf X})+I(Q_{\sf U},Q_{\sf X|U})-E\}$ is necessary and sufficient to asymptotically approximate the output distribution $\pi_{\sf X}^{\otimes n}$ using the channel $Q_{\sf X|U}^{\otimes n}$, where $Q_{\sf U}\to Q_{\sf X|U}\to Q_{\sf X}$. The new resolvability result is then used to derive a one-shot upper bound on the error probability in the rate distortion problem; and a lower bound on the size of the eavesdropper list to include the actual message in the wiretap channel problem. Both bounds are asymptotically tight in i.i.d.~settings.
\end{abstract}

\section{Introduction}
Approximation of a target output distribution with a given channel has proved to be the key technical step in the solution of many problems in information theory. In 1975 Wyner first studied such an approximation task to establish the achievability part for Wyner's common information \cite{wyner1975common}, where he used the normalized relative entropy to quantify the distance between the synthesized output distribution and the target distribution. Later Han and Verd\'{u} coined the term \emph{resolvability} for the minimum rate of the randomness needed for the input \cite{han1993approximation}. Motivated by the strong converse of the identification coding theorem, \cite{han2013reliability} considered resolvability in the total variation distance (TV), as well as relative entropy. The achievability part of resolvability (also known as the soft-covering lemma \cite{cuff2012distributed}) is particularly useful, e.g.~in secrecy \cite{csiszar1996}\cite{hayashi2006general}\cite{bloch2013strong}, channel synthesis \cite{cuff2012distributed} and lossless and lossy source coding \cite{han1993approximation}\cite{song}\cite{steinberg1996simulation}.
Under both the normalized relative entropy measure and TV, the resolvability can be shown to be the minimum mutual information over all input distributions inducing the target output distribution, and this is also known to be true for unnormalized relative entropy as well (see for example \cite{han2013reliability}).

%In contrast, relatively few measures for the quality of the approximation of output statistics have been proposed for which the resolvabilities can be smaller than mutual information. One exception is the Vasershtein (Wasserstein) distance measure, in which case the finite precision resolvability for the identity channel can be related to the rate-distortion function of the source where the distortion is the metric in the definition of the Vasershtein distance \cite{steinberg1996simulation}.

In this paper we propose two new measures for approximation of output statistics. The first one, \emph{excess information}, gives a straightforward upper bound on the second metric\footnote{Here ``metric'' or ``distance'' are used informally since they do not satisfy either symmetry or the triangle inequality.}, the $E_{\gamma}$ metric. The $E_{\gamma}$ metric was, to our knowledge, originally introduced in \cite{polyanskiy2010channel} to simplify the formula of the DT bound therein. The latter metric has clear operational significance and reduces to the TV in the special case of $\gamma=1$, whereas the former is easier to upperbound. Asymptotically, however, the two metrics behave in the same way. We derive a one-shot upperbound on the first (hence also the second) metric in the resolvability problem. Bounding the new metrics requires more care than the traditional TV to achieve asymptotic tightness.

Particularly interesting is the case where the channel is stationary memoryless and $\gamma$ grows exponentially as the number of channel uses tends to infinity. In this case a single letter formula of the rate of randomness needed to approximate a tensor power output distribution in $E_{\gamma}$ can be obtained from the aforementioned one-shot bound. Here a peculiar feature of approximation in $E_{\gamma}$ emerges: the distribution of each codeword in the generation of the random codebook need not induce the target output distribution through the stationary memoryless channel, and in fact the optimal choice of such a distribution (in the sense of requiring the minimum rate of randomness) generally does \emph{not} induce the target distribution. This is in stark contrast to the case of TV measure, where the codeword distribution \emph{must} induce the target distribution to ensure that the total variation between the output distribution and the target distribution does not converge to its maximum value, 2, asymptotically.

Two applications of the new channel resolvability results are presented. First, the simplest application to lossy source coding yields a new achievability bound on the probability that the distortion lies below a certain number, which in the asymptotic setting recovers the exponent of this probability previously obtained using the method of types (c.f.~\cite{csiszar2011information}). The advantage of the new derivation is its applicability beyond the discrete memoryless framework.

The second application is in the achievability part of wiretap channels, where we propose a novel interpretation of secrecy in terms of the eavesdropper's ability to perform list decoding. In contrast to the previous proofs for wiretap channels using TV-resolvability \cite{hayashi2006general}\cite{bloch2013strong} which only applies when the rate is below the perfect secrecy capacity, the new resolvability in $E_{\gamma}$ yields lower bounds on the required size of the eavesdropper list for all possible rates. This interpretation of security in terms of list size is reminiscent of equivocation \cite{wyner1975wire}, and indeed we obtain the same formula in the asymptotic setting, even though it is not immediate to prove a correspondence between the two. We also consider the case where the eavesdropper wishes to detect that no message is sent with high probability. This is a practical setup because ``no message'' may be a special piece of information which the eavesdropper wants to know with high certainty. We obtain single letter expressions of the tradeoff between the transmission rate, eavesdropper list, and the exponent of the probability that the eavesdropper fails to detect non-message. Those bounds are asymptotically tight for random codes.

\section{Preliminaries}
\subsection{Excess Information Metric}
One natural measure of the discrepancy between two distributions $P$ and $Q$ on the same alphabet may be called the \emph{excess information metric with threshold $\gamma$}:
\begin{align}\label{e_ex}
\mathbb{P}[\imath_{P||Q}(X)>\log\gamma]
\end{align}
where $X\sim P$ and
\begin{align}
\imath_{P||Q}(x):=\log\frac{{\rm d}P}{{\rm d}Q}(x).
\end{align}
Notice that in additional to being more suitable for a one-shot approach, \eqref{e_ex} provides richer information than the relative entropy measure since
\begin{align}
D(P||Q)
&=\int_{[0,+\infty)}\mathbb{P}[\imath_{P||Q}(X)>\tau]{\rm d}\tau
\nonumber
\\
&\quad-\int_{(-\infty,0]}(1-\mathbb{P}[\imath_{P||Q}(X)>\tau]){\rm d}\tau.
\end{align}
We note that the excess information metric does not satisfy a data processing property. More precisely, suppose
$P_X\to P_{Y|X} \to P_{Y}$, $Q_X\to P_{Y|X} \to Q_Y$, then it is \emph{not} always true that
\begin{align}
\mathbb{P}[\imath_{P_X||Q_X}(X)\ge \tau]< \mathbb{P}[\imath_{P_Y||Q_Y}(Y)\ge \tau]
\end{align}
where $(X,Y)\sim P_{XY}$.

\subsection{The $E_{\gamma}(P||Q)$ Metric}
Next we consider another metric which does satisfy the data processing inequality and has a clearer operational meaning.
Given probability distributions $P$, $Q$ and a constant $\gamma\ge1$, define an $f$-divergence \cite{csiszar1967}
\begin{align}
E_{\gamma}(P||Q):=\mathbb{P}[\imath_{P||Q}(X)>\log\gamma]-\gamma\mathbb{P}[\imath_{P||Q}(Y)>\log\gamma]
\end{align}
where $X\sim P$ and $Y\sim Q$.
This quantity was introduced in \cite{polyanskiy2010channel} to simplify the expression of DT bound.
From the Neyman-Pearson lemma we have the alternative formula for the above quantity:
\begin{align}
E_{\gamma}(P||Q)=\max_{A}(P(A)-\gamma Q(A)),
\end{align}
which becomes half of the total variation distance (the $\ell_1$ distance) between $P$ and $Q$ when $\gamma=1$. Some basic properties of $E_{\gamma}$ are in order:
\begin{prop}\label{prop_egamma}
\begin{enumerate}
\item For any event $A$,
\begin{align}
Q(A)\ge \frac{1}{\gamma}(P(A)-E_{\gamma}(P||Q)).
\end{align}
%\item Triangle inequality: if $P$, $Q$ and $S$ are distributions on the same sample space, then
%\begin{align}
%E_{\gamma_1\gamma_2}(P||Q)\le E_{\gamma_1}(P||S)+\gamma_1E_{\gamma_2}(S||Q).
%\end{align}
\item  If $P_XP_{Y|X}$ and $Q_XQ_{Y|X}$ are joint distributions on $\mathcal{X}\times\mathcal{Y}$, then
\begin{align}
E_{\gamma}(P_X||Q_X)\le E_{\gamma}(P_XP_{Y|X}||Q_XQ_{Y|X})
\end{align}
where equality holds when $P_{Y|X}=Q_{Y|X}$. In the latter case we obtain the data processing inequality:
\begin{align}
E_{\gamma}(P_Y||Q_Y)\le E_{\gamma}(P_X||Q_X)
\end{align}
\item Given $P_X$, $P_{Y|X}$ and $Q_{Y|X}$,
define
\begin{align}
E_{\gamma}(P_{Y|X}||Q_{Y|X}|P_X):=\mathbb{E}[E_{\gamma}(P_{Y|X}(\cdot|X)||
Q_{Y|X}(\cdot|X))]
\end{align}
where the expectation is w.r.t.~$X\sim P_X$. Then
\begin{align}
E_{\gamma}(P_XP_{Y|X}||P_XQ_{Y|X})= E_{\gamma}(P_{Y|X}||Q_{Y|X}|P_X).
\end{align}
\end{enumerate}
\end{prop}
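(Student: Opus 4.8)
The plan is to deduce all three parts from the Neyman--Pearson characterization $E_{\gamma}(P\|Q)=\max_{A}\bigl(P(A)-\gamma Q(A)\bigr)$ together with an explicit identification of the relevant information densities. \emph{Part 1} is then immediate: applying the variational formula to the single event $A$ gives $P(A)-\gamma Q(A)\le E_{\gamma}(P\|Q)$, and rearranging yields the stated lower bound on $Q(A)$.

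For the inequality in \emph{Part 2} I would lift an arbitrary event $A\subseteq\mathcal{X}$ to the rectangle $A\times\mathcal{Y}$ and note that $P_XP_{Y|X}(A\times\mathcal{Y})=P_X(A)$ and $Q_XQ_{Y|X}(A\times\mathcal{Y})=Q_X(A)$, so that $P_X(A)-\gamma Q_X(A)\le E_{\gamma}(P_XP_{Y|X}\|Q_XQ_{Y|X})$; taking the supremum over $A$ gives the claim. (Equivalently, this is the data-processing inequality for the $f$-divergence $E_{\gamma}$ under the deterministic map $(X,Y)\mapsto X$, and the same rectangle argument applied to sets of the form $\mathcal{X}\times B$ shows that marginalization onto $Y$ is likewise non-increasing.) When $P_{Y|X}=Q_{Y|X}=:W$, I would compute $\frac{\mathrm{d}(P_XW)}{\mathrm{d}(Q_XW)}(x,y)=\frac{\mathrm{d}P_X}{\mathrm{d}Q_X}(x)$, hence $\imath_{P_XW\|Q_XW}(x,y)=\imath_{P_X\|Q_X}(x)$ almost surely; since the law of the information density is then the same for the joint as for the $X$-marginal under both the $P$-law and the $Q$-law, the defining formula gives $E_{\gamma}(P_XW\|Q_XW)=E_{\gamma}(P_X\|Q_X)$. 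The data-processing statement $E_{\gamma}(P_Y\|Q_Y)\le E_{\gamma}(P_X\|Q_X)$ then follows by chaining the marginalize-onto-$Y$ inequality with this equality: $E_{\gamma}(P_Y\|Q_Y)\le E_{\gamma}(P_XP_{Y|X}\|Q_XP_{Y|X})=E_{\gamma}(P_X\|Q_X)$.

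For \emph{Part 3}, since both joints share the $X$-marginal $P_X$, I would use $\frac{\mathrm{d}(P_XP_{Y|X})}{\mathrm{d}(P_XQ_{Y|X})}(x,y)=\frac{\mathrm{d}P_{Y|X}(\cdot|x)}{\mathrm{d}Q_{Y|X}(\cdot|x)}(y)$, so that $\imath_{P_XP_{Y|X}\|P_XQ_{Y|X}}(x,y)=\imath_{P_{Y|X}(\cdot|x)\|Q_{Y|X}(\cdot|x)}(y)$. Substituting this into $E_{\gamma}(P_XP_{Y|X}\|P_XQ_{Y|X})=\mathbb{P}[\imath>\log\gamma]-\gamma\,\mathbb{P}[\imath>\log\gamma]$, where the first tail probability is under $P_XP_{Y|X}$ and the second under $P_XQ_{Y|X}$, and then conditioning on $X$ via the tower property, the inner conditional quantity is exactly $E_{\gamma}(P_{Y|X}(\cdot|X)\|Q_{Y|X}(\cdot|X))$; taking the expectation over $X\sim P_X$ completes the argument.

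Everything here is elementary, and I expect no conceptual obstacle; the only genuine care needed is the measure-theoretic bookkeeping — checking absolute continuity so that the Radon--Nikodym derivatives above are well defined (and noting that the defining tail-probability formula handles the general case directly), and invoking Fubini to exchange the $X$-expectation with the $\gamma$-weighted tail probabilities in Part 3.
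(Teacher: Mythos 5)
Your proof is correct and is the standard argument; the paper states Proposition~\ref{prop_egamma} without proof, and your route (the variational formula $E_{\gamma}(P\|Q)=\max_A(P(A)-\gamma Q(A))$ for parts 1--2, plus the identification of the information density $\imath_{P_XP_{Y|X}\|P_XQ_{Y|X}}(x,y)=\imath_{P_{Y|X}(\cdot|x)\|Q_{Y|X}(\cdot|x)}(y)$ and the tower property for part 3) is exactly what the authors intend. No gaps.
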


\section{Achievability Bounds on Excess Information}\label{sec_ach}
We present a one-shot information spectrum achievability bound for resolvability under the excess information metric, which then automatically implies a bound under the $E_{\gamma}$ metric. Consider the setting of Figure~\ref{f2}. The input to the channel $Q_{X|U}$ is equiprobably selected from a codebook $(c_l)_{l=1}^L\in\mathcal{U}^L$. It turns out that codewords are i.i.d.~codewords are usually good enough, and the expected distance from the synthesized distribution $P_{X(\mb{c})}$ to the target distribution $\pi_X$ under the excess information metric is gauged as follows:
\begin{figure}[h!]
  \centering
\begin{tikzpicture}
[node distance=0.5cm,minimum height=10mm,minimum width=10mm,arw/.style={->,>=stealth'}]
  \node[rectangle,rounded corners] (SD) {$l$};
  \node[rectangle,draw,rounded corners] (PX) [right =of SD] {$(c_l)_{l=1}^L$};
  \node[rectangle,draw,rounded corners] (PZX) [right =1cm of PX] {$Q_{X|U}$};
  \node[rectangle,rounded corners] (PZ) [right =1cm of PZX] {$P_X\approx\pi_X$};
%  \node[rectangle,rounded corners] (TG) [below =0.7cm of PZX] {Target:};
%  \node[rectangle,rounded corners] (PZ1) [below =0.7cm of PZ] {$\pi_X$};

  \draw [arw] (SD) to node[midway,above]{} (PX);
  \draw [arw] (PX) to node[midway,above]{} (PZX);
  \draw [arw] (PZX) to node[midway,above]{} (PZ);
\end{tikzpicture}
\caption{Synthesizing a target distribution $\pi_X$ using a random number generator and a codebook $(c_l)_{l=1}^L$.}
\label{f2}
\end{figure}
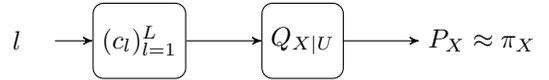

\begin{thm}\label{thm3}
Fix $\pi_X$ and $Q_{UX}=Q_UQ_{X|U}$. Let $\mathbf{c}=[c_1,\dots,c_L]$ be i.i.d.~according to $Q_X$. Define
\begin{align}
P_{X(\mb{c})}:=\frac{1}{L}\sum_{l=1}^L Q_{X|U=c_l}.
\end{align}
Then for any
$\tau,\gamma,\epsilon,\sigma>0$ satisfying $\gamma>\epsilon+\sigma$ and $0<\delta<1$, it holds that
\begin{align}
\mathbb{P}\left[\frac{{\rm d}P_{X(\mb{c})}}{{\rm d}\pi_X}(\hat{X})>\gamma
\right]
&\le \mathbb{P}\left[\frac{{\rm d}Q_X}{{\rm d}\pi_X}(X)>\gamma-\sigma-\epsilon\right]
\nonumber\\
&\quad+\mathbb{P}\left[\frac{{\rm d}Q_{X|U}}{{\rm d}\pi_X}(X|U)> \delta L\sigma\right]
\nonumber\\
&\quad+\frac{\exp(\tau)(\gamma-\sigma-\epsilon)^2}{L(1-\delta)^2\sigma^2}
\nonumber\\
&\quad+\frac{\gamma-\sigma-\epsilon}{\epsilon}\mathbb{P}[\imath_{U;X}(U;X)>\tau]
\label{e_ub}
\end{align}
where conditioned on $\mb{c}$, $\hat{X}\sim P_{X(\mb{c})}$, and $(U,X)\sim Q_U Q_{X|U}$.
\end{thm}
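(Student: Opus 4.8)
### Proof Proposal

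The plan is to bound the probability that $\frac{\mathrm{d}P_{X(\mathbf{c})}}{\mathrm{d}\pi_X}(\hat X) > \gamma$ by peeling off the "bad" contributions term by term, so that on the remaining good event a concentration argument (essentially a second-moment/Chebyshev bound) suffices. First I would decompose $P_{X(\mathbf{c})} = \frac{1}{L}\sum_{l} Q_{X|U=c_l}$ and split each summand into a part where the single-letter density $\frac{\mathrm{d}Q_{X|U}}{\mathrm{d}\pi_X}(x|c_l)$ is below a threshold (say $\delta L\sigma$) and a part where it exceeds it. The contribution of the above-threshold part, averaged over the codebook and over $\hat X$, is controlled by $\mathbb{P}[\frac{\mathrm{d}Q_{X|U}}{\mathrm{d}\pi_X}(X|U) > \delta L \sigma]$ — this accounts for the second term on the right-hand side of \eqref{e_ub}.

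Next, for the truncated quantity $\tilde P_{X(\mathbf{c})}$ built only from the below-threshold pieces, I would analyze $\mathbb{P}[\frac{\mathrm{d}\tilde P_{X(\mathbf{c})}}{\mathrm{d}\pi_X}(\hat X) > \gamma - \sigma]$ (the slack $\sigma$ absorbs the discarded mass). The key observation is that $\mathbb{E}_{\mathbf{c}}[\tilde P_{X(\mathbf{c})}]$ is close to $Q_X$ in the relevant sense, so one writes $\frac{\mathrm{d}\tilde P_{X(\mathbf{c})}}{\mathrm{d}\pi_X} \le \frac{\mathrm{d}Q_X}{\mathrm{d}\pi_X} + (\text{fluctuation})$. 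Whenever the density exceeds $\gamma - \sigma$, either $\frac{\mathrm{d}Q_X}{\mathrm{d}\pi_X}(\hat X) > \gamma - \sigma - \epsilon$ — this is the first term, after one shows that integrating this event against $\tilde P_{X(\mathbf{c})}$ rather than against $Q_X$ costs at most a factor controlled by the change-of-measure inequality in Proposition~\ref{prop_egamma} — or the fluctuation part exceeds $\epsilon$. For the fluctuation, since the truncated summands are i.i.d.\ and bounded by $\delta L \sigma$ given the truncation, Chebyshev's inequality gives a bound of order $\frac{\mathrm{Var}}{L \epsilon^2}$; bounding the per-letter second moment by $(\gamma - \sigma - \epsilon)\cdot$(truncation level) and the truncation level by $\delta L \sigma$ yields, after the correct bookkeeping, the third term $\frac{\exp(\tau)(\gamma-\sigma-\epsilon)^2}{L(1-\delta)^2\sigma^2}$. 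The $\exp(\tau)$ appears because to get a dimensionally correct variance bound one further restricts to the event $\imath_{U;X}(U;X)\le\tau$, i.e.\ $\frac{\mathrm{d}Q_{X|U}}{\mathrm{d}Q_X}(X|U)\le e^\tau$; the complementary event contributes the last term $\frac{\gamma-\sigma-\epsilon}{\epsilon}\mathbb{P}[\imath_{U;X}(U;X)>\tau]$, again with the prefactor coming from a change of measure from $\pi_X$ back to the codeword-generating law.

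The main obstacle I expect is not any single inequality but the careful choreography of the change-of-measure steps: the event is measured under $\hat X \sim P_{X(\mathbf{c})}$, but the clean bounds (Chebyshev, the tail of $\imath_{U;X}$) naturally live under $Q_X$ or $Q_{X|U}$, and each transfer costs either a factor $\gamma$-type term or forces introduction of an auxiliary threshold. Getting the slack parameters $\sigma, \epsilon$ and the split parameter $\delta$ to interact so that the final four terms come out exactly as stated — in particular so that the $(1-\delta)^2$ and the $\frac{\gamma-\sigma-\epsilon}{\epsilon}$ prefactors land in the right places — is the delicate part. A secondary subtlety is ensuring the truncation at level $\delta L\sigma$ is compatible with the requirement $\gamma > \epsilon + \sigma$ and the normalization, so that $\tilde P_{X(\mathbf{c})}$ loses at most $\sigma$ worth of density pointwise on the relevant event. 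Once the decomposition is set up correctly, each of the four resulting bounds is a routine application of Markov/Chebyshev and Proposition~\ref{prop_egamma}(1).
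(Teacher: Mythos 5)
Your overall flavor (truncate, then Markov/Chebyshev, with a change of measure to handle the information-density tail) is the right one, but the proposal misses the one structural idea that makes the four terms come out as stated: you must treat the summand of the \emph{selected} codeword differently from the other $L-1$. Conditioned on the index $l^*$ that generated $\hat{X}$, the pair $(c_{l^*},\hat{X})$ has law exactly $Q_{UX}$, while the remaining codewords are i.i.d.\ $Q_U$ and \emph{independent} of $\hat{X}$ (jointly $Q_U\times Q_X$, not $Q_{UX}$). The $\delta L\sigma$ truncation is applied only to the selected summand: the event that it exceeds the threshold has probability exactly $\mathbb{P}_{Q_{UX}}[\tfrac{{\rm d}Q_{X|U}}{{\rm d}\pi_X}(X|U)>\delta L\sigma]$ (the second term), and otherwise its contribution to $\tfrac{1}{L}\sum_l$ is at most $\delta\sigma$. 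Your plan of splitting \emph{every} summand at $\delta L\sigma$ and controlling "the contribution of the above-threshold part, averaged over the codebook and over $\hat X$" does not produce the second term: for $l\ne l^*$ the relevant tail probability is under $Q_U\times Q_X$, and a union bound over those $L-1$ events neither collapses to $\mathbb{P}_{Q_{UX}}[\cdot]$ nor is absorbed by "$\sigma$ worth of density". The independence of the non-selected codewords from $\hat X$ is also precisely what licenses Chebyshev on their sum (conditioned on $\hat X=x$ they are i.i.d.); if the selected summand stays in that sum the i.i.d.\ structure you invoke is false.

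Two further bookkeeping errors would prevent the stated constants from emerging. First, no change-of-measure factor from Proposition~\ref{prop_egamma} is needed (or allowed) for the first term: since $\mathbb{E}_{\mathbf{c}}[P_{X(\mathbf{c})}]=Q_X$, the unconditional marginal of $\hat X$ \emph{is} $Q_X$, so $\mathbb{P}[\tfrac{{\rm d}Q_X}{{\rm d}\pi_X}(\hat X)>\gamma-\sigma-\epsilon]$ already equals the first term with no extra factor; inserting a $\gamma$-type factor there would break the bound. Second, you have the roles of $\epsilon$ and $\sigma$ interchanged in the concentration step. The correct split is: mean $\le\gamma-\sigma-\epsilon$ (first term), selected summand $\le\delta\sigma$, the part of the remaining sum with $\imath_{U;X}>\tau$ handled by Markov at level $\epsilon$ (after changing measure $Q_U\to Q_{U|X=x}$ at cost $\tfrac{{\rm d}Q_X}{{\rm d}\pi_X}(x)\le\gamma-\sigma-\epsilon$, giving the fourth term), and Chebyshev on the part with $\imath_{U;X}\le\tau$ at deviation level $(1-\delta)\sigma$ — whence $(1-\delta)^2\sigma^2$, not $\epsilon^2$, in the denominator. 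The per-letter second moment there is bounded by $e^\tau(\tfrac{{\rm d}Q_X}{{\rm d}\pi_X}(x))^2\le e^\tau(\gamma-\sigma-\epsilon)^2$ using $\tfrac{{\rm d}Q_{X|U}}{{\rm d}\pi_X}(x|u)=e^{\imath_{U;X}(u;x)}\tfrac{{\rm d}Q_X}{{\rm d}\pi_X}(x)\le e^\tau\tfrac{{\rm d}Q_X}{{\rm d}\pi_X}(x)$; bounding it by the $\delta L\sigma$ truncation level, as you propose, would leave a spurious factor of $L$ in the numerator of the third term.
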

\begin{rem}\label{remweaken}
By setting $\tau\leftarrow-\infty$ and letting $\delta\uparrow 1$, the bound in Theorem~\ref{thm3} can be weakened in the following slightly simpler form:
\begin{align}
\mathbb{P}\left[\frac{{\rm d}P_X}{{\rm d}\pi_X}(\hat{X})>\gamma\right]
&\le \mathbb{P}\left[\frac{{\rm d}Q_X}{{\rm d}\pi_X}(X)>\gamma-\sigma-\epsilon\right]\nonumber%
\\
&\quad+\mathbb{P}\left[\frac{{\rm d}Q_{X|U}}{{\rm d}\pi_X}(X|U)\ge L\sigma\right]\nonumber%\label{t2}
\\
&\quad+\frac{\gamma-\sigma-\epsilon}{\epsilon}\label{t1}
\end{align}
The weakened bound \eqref{t1} is still asymptotically tight provided that the exponent with which the threshold $\gamma$ grows is positive; see Corollary~\ref{cor_asymp} below. However, when the exponent is zero (corresponding to the total variation case), we do need $\tau$ in the bound for asymptotic tightness.
\end{rem}
The proof of Theorem~\ref{thm3} is omitted due to space limitations. Next we particularize Theorem~\ref{thm3} to the case of stationary memoryless channels and an exponentially growing threshold $\gamma$, to obtain explicit single-letter formula for the tradeoff between $R$ and the exponent of $\gamma$:
\begin{cor}\label{cor_asymp}
Fix per-letter distributions $\pi_{\sf X}$ and $Q_{\sf UX}=Q_{\sf U}Q_{\sf X|U}$. Let $\mathbf{c}=[c_1,\dots,c_L]$ be i.i.d.~according to $Q^{\otimes n}_{\sf X}$. Define
\begin{align}
P_{{\sf X}^n(\mb{c})}:=\frac{1}{L}\sum_{l=1}^L Q_{{\sf X}^n|{\sf U}^n=c_l}.
\end{align}
Suppose $\gamma=\exp(nE)$ and $L=\exp(nR)$. Then
\begin{align}
\lim_{n\to\infty}\mathbb{E}[E_{\gamma}(P_{{\sf X}^n(\mb{c})}||\pi_{{\sf X}}^{\otimes n})]
&=\lim_{n\to\infty}\mathbb{P}\left[\frac{{\rm d}P_{{\sf X}^n(\mb{c})}}{{\rm d}\pi_{\sf X}^{\otimes n}}(\hat{\sf X}^n)>\gamma\right]
\nonumber
\\
&=0
\end{align}
provided that
\begin{align}\label{e_e}
E>D(Q_{\sf X}||\pi_{\sf X})+[I(Q_{\sf U},Q_{\sf X|U})-R]^+,
\end{align}
where conditioned on $\mb{c}$, the vector $\hat{\sf X}^n\sim P_{{\sf X}^n(\mb{c})}$.
Moreover, the bound in \eqref{e_e} is tight.
\end{cor}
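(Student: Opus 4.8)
The plan is to prove both directions — that \eqref{e_e} is sufficient (achievability) and that it cannot be relaxed (the tightness claim) — the first by specializing Theorem~\ref{thm3} and the second by exhibiting explicit test sets. For the achievability half it is enough to drive the middle quantity $\mathbb{P}[\frac{{\rm d}P_{{\sf X}^n(\mb{c})}}{{\rm d}\pi_{\sf X}^{\otimes n}}(\hat{\sf X}^n)>\gamma]$ to $0$, since $E_\gamma(P\|Q)\le\mathbb{P}[\frac{{\rm d}P}{{\rm d}Q}(X)>\gamma]$ for $X\sim P$ straight from the definition of $E_\gamma$, and the leftmost expectation is nonnegative and dominated by that probability. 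To do this I would observe that the hypothesis $E>D(Q_{\sf X}\|\pi_{\sf X})+[I(Q_{\sf U},Q_{\sf X|U})-R]^+$ forces $E>0$, so Remark~\ref{remweaken} permits the use of the weakened bound \eqref{t1}, read with every single-letter object replaced by its $n$-fold product and with $\gamma=\exp(nE)$, $L=\exp(nR)$. Pick $\delta>0$ with $E-2\delta>D(Q_{\sf X}\|\pi_{\sf X})+[I(Q_{\sf U},Q_{\sf X|U})-R]^+$ and set $\sigma:=\gamma e^{-2n\delta}$, $\epsilon:=\gamma(1-e^{-n\delta})$; these satisfy $\gamma>\sigma+\epsilon>0$ and $\gamma-\sigma-\epsilon=\gamma e^{-n\delta}(1-e^{-n\delta})$. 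Then the third term of \eqref{t1} is exactly $e^{-n\delta}$; the first term, $\mathbb{P}[\frac1n\imath_{Q_{\sf X}^{\otimes n}\|\pi_{\sf X}^{\otimes n}}({\sf X}^n)>\frac1n\log(\gamma-\sigma-\epsilon)]$ with $\frac1n\log(\gamma-\sigma-\epsilon)\to E-\delta$, vanishes by the weak law of large numbers because $\frac1n\imath_{Q_{\sf X}^{\otimes n}\|\pi_{\sf X}^{\otimes n}}({\sf X}^n)\to D(Q_{\sf X}\|\pi_{\sf X})<E-\delta$ in probability; and the second term, $\mathbb{P}[\frac1n\log\frac{{\rm d}Q_{{\sf X}^n|{\sf U}^n}}{{\rm d}\pi_{\sf X}^{\otimes n}}({\sf X}^n|{\sf U}^n)\ge R+E-2\delta]$, vanishes likewise, its normalized quantity converging to $D(Q_{\sf X|U}\|\pi_{\sf X}|Q_{\sf U})=I(Q_{\sf U},Q_{\sf X|U})+D(Q_{\sf X}\|\pi_{\sf X})<R+E-2\delta$; this proves the ``if'' part.

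For the converse I would show that $E<D(Q_{\sf X}\|\pi_{\sf X})+[I(Q_{\sf U},Q_{\sf X|U})-R]^+$ forces $\mathbb{E}[E_\gamma(P_{{\sf X}^n(\mb{c})}\|\pi_{\sf X}^{\otimes n})]\to1$, treating two regimes. If $R\ge I(Q_{\sf U},Q_{\sf X|U})$ the threshold is just $D(Q_{\sf X}\|\pi_{\sf X})$; since $E_\gamma(\cdot\|\pi_{\sf X}^{\otimes n})$ is convex (a supremum of affine functionals, by the Neyman--Pearson form) and $\mathbb{E}_{\mb{c}}[P_{{\sf X}^n(\mb{c})}]=Q_{\sf X}^{\otimes n}$ (because $Q_{\sf U}\to Q_{\sf X|U}\to Q_{\sf X}$), Jensen gives $\mathbb{E}[E_\gamma(P_{{\sf X}^n(\mb{c})}\|\pi_{\sf X}^{\otimes n})]\ge E_\gamma(Q_{\sf X}^{\otimes n}\|\pi_{\sf X}^{\otimes n})$, and testing the right side on $\{\imath_{Q_{\sf X}^{\otimes n}\|\pi_{\sf X}^{\otimes n}}>n(D(Q_{\sf X}\|\pi_{\sf X})-\eta)\}$ for $0<\eta<D(Q_{\sf X}\|\pi_{\sf X})-E$ bounds it below by $1-o(1)-\exp(-n(D(Q_{\sf X}\|\pi_{\sf X})-E-\eta))\to1$. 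If $R<I(Q_{\sf U},Q_{\sf X|U})$ the live regime is $E<D(Q_{\sf X}\|\pi_{\sf X})+I(Q_{\sf U},Q_{\sf X|U})-R$; here I would take the codebook-dependent set $A(\mb{c}):=\bigcup_{l=1}^L B(c_l)$ with $B(c):=\{x^n:\frac{{\rm d}Q_{{\sf X}^n|{\sf U}^n=c}}{{\rm d}\pi_{\sf X}^{\otimes n}}(x^n|c)\ge\exp(n(I(Q_{\sf U},Q_{\sf X|U})+D(Q_{\sf X}\|\pi_{\sf X})-\eta))\}$ for a small $\eta>0$. A change of measure bounds, for every $\mb{c}$, $\pi_{\sf X}^{\otimes n}(A(\mb{c}))\le L\exp(-n(I(Q_{\sf U},Q_{\sf X|U})+D(Q_{\sf X}\|\pi_{\sf X})-\eta))=\exp(n(R-I(Q_{\sf U},Q_{\sf X|U})-D(Q_{\sf X}\|\pi_{\sf X})+\eta))$, while $\mathbb{E}_{\mb{c}}[P_{{\sf X}^n(\mb{c})}(A(\mb{c}))]\ge\mathbb{P}[\frac1n\log\frac{{\rm d}Q_{{\sf X}^n|{\sf U}^n}}{{\rm d}\pi_{\sf X}^{\otimes n}}({\sf X}^n|{\sf U}^n)\ge I(Q_{\sf U},Q_{\sf X|U})+D(Q_{\sf X}\|\pi_{\sf X})-\eta]\to1$ by the law of large numbers; hence $\mathbb{E}[E_\gamma(P_{{\sf X}^n(\mb{c})}\|\pi_{\sf X}^{\otimes n})]\ge\mathbb{E}_{\mb{c}}[P_{{\sf X}^n(\mb{c})}(A(\mb{c}))]-\gamma\sup_{\mb{c}}\pi_{\sf X}^{\otimes n}(A(\mb{c}))\ge1-o(1)-\exp(n(E+R-I(Q_{\sf U},Q_{\sf X|U})-D(Q_{\sf X}\|\pi_{\sf X})+\eta))\to1$ once $\eta$ is taken small enough. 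The two regimes together exhaust $E<D(Q_{\sf X}\|\pi_{\sf X})+[I(Q_{\sf U},Q_{\sf X|U})-R]^+$, so \eqref{e_e} is the exact threshold.

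The step I expect to be the main obstacle is the achievability tuning: $\sigma$ and $\epsilon$ must be pushed right against $\gamma$ so that the collision term $\frac{\gamma-\sigma-\epsilon}{\epsilon}$ of \eqref{t1} vanishes, yet the two information-spectrum terms must still retain a strictly positive exponential slack — exactly the trade-off flagged in Remark~\ref{remweaken}, which has no counterpart in the classical total-variation soft-covering lemma. On the converse side the analogous delicacy is that, because the codeword distribution does \emph{not} induce $\pi_{\sf X}^{\otimes n}$, the witnessing set in the low-rate regime has to be assembled from the individual codeword conditionals $Q_{{\sf X}^n|{\sf U}^n=c_l}$, and the per-codeword $\pi_{\sf X}^{\otimes n}$-measure estimate must survive a union bound over all $\exp(nR)$ of them; this is precisely why the argument splits at $R=I(Q_{\sf U},Q_{\sf X|U})$, the Jensen bound being needed on the high-rate side, where the union bound would be hopelessly loose.
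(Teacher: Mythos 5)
Your achievability argument is the same as the paper's: you specialize the weakened bound \eqref{t1} with $\gamma = \exp(nE)$, $L = \exp(nR)$, drive the collision term $\frac{\gamma-\sigma-\epsilon}{\epsilon}$ to zero by pushing $\sigma+\epsilon$ exponentially close to $\gamma$, and let the two information-spectrum terms vanish by the law of large numbers. Your parametrization $\sigma=\gamma e^{-2n\delta}$, $\epsilon=\gamma(1-e^{-n\delta})$ is algebraically interchangeable with the paper's choice $\epsilon=\exp(nE)-\exp(nE')$, $\sigma=\tfrac12\exp(nE')$, and the needed strict inequalities $D(Q_{\sf X}\|\pi_{\sf X})<E-\delta$ and $I(Q_{\sf U},Q_{\sf X|U})+D(Q_{\sf X}\|\pi_{\sf X})<R+E-2\delta$ are verified in both regimes of $[\,\cdot\,]^+$.

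The converse is a genuinely different route from the one the paper sketches. The paper proposes a single test set $\mathcal{A}=\bigcup_{l}T_{Q_{\sf X|U},\delta}(c_l)$ built from conditionally typical sequences and claims $(P_{{\sf X}^n(\mb{c})}-\gamma\,\pi_{\sf X}^{\otimes n})(\mathcal{A})\to 1$ for suitable $\delta$ (there is a typo in the paper: the subtracted measure must be $\pi_{\sf X}^{\otimes n}$, not $Q_{\sf X}^{\otimes n}$). That single set handles both rate regimes at once precisely because a union of conditionally typical sets is contained, up to $\delta$-slack, in the $Q_{\sf X}$-typical set, so $\pi_{\sf X}^{\otimes n}(\mathcal{A})\lesssim\min\{L e^{-n(I+D)},e^{-nD}\}$. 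Your argument replaces typicality with information-density level sets and, because a union of such per-codeword sets is \emph{not} automatically contained in the $Q_{\sf X}$-information-density ball, you correctly observe that the union bound is hopelessly loose for $R>I$; you patch this with a case split, invoking Jensen's inequality (via convexity of $P\mapsto E_\gamma(P\|\pi)$ and $\mathbb{E}_{\mb{c}}P_{{\sf X}^n(\mb{c})}=Q_{\sf X}^{\otimes n}$) on the high-rate side and the union-of-level-sets bound with change of measure on the low-rate side. Both routes establish the converse. Yours trades the typicality machinery for an extra Jensen step, which buys generality: it works on abstract alphabets where types are unavailable, matching the paper's stated aim of applicability beyond the discrete memoryless setting, whereas the typical-set sketch is cleaner but implicitly confined to finite alphabets.
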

\begin{proof}[Proof of Achievability]Choose $E'$ such that
\begin{align}
E>E'>D(Q_{\sf X}||\pi_{\sf X})+[I(Q_{\sf U},Q_{\sf X|U})-R]^+.
\end{align}
Set $\delta=\frac{1}{2}$, $\gamma=\exp(nE)$, $L=\exp(nR)$, $\epsilon=\exp(nE)-\exp(nE')$ and $\sigma=\frac{1}{2}(\gamma-\epsilon)=\frac{1}{2}\exp(nE')$, and apply \eqref{t1}. Notice that
\begin{align}
\mathbb{E}\left[\frac{{\rm d}Q_{X|U}}{{\rm d}\pi_X}(X|U)\right]=n[I(Q_{\sf U},Q_{\sf X|U})+D(Q_{\sf X}||\pi_{\sf X})]
\end{align}
where $(X,U)\sim Q_{\sf XU}^{\otimes n}$.
By the law of large numbers, the first and second terms in \eqref{t1} vanish because
\begin{align}
D(Q_{\sf X}||\pi_{\sf X})&<E';
\\
I(Q_{\sf U},Q_{\sf X|U})+D(Q_{\sf X}||\pi_{\sf X})&<E'+R
\end{align}
are satisfied.
\end{proof}
%Corollary~\ref{cor_asymp} implies that to asymptotically approximate an output distribution $\pi_{\sf X}^{\otimes n}$ in the $E_{\exp(nE)}(\cdot||\pi_{\sf X}^{\otimes n})$ distance using the channel $Q_{\sf X|U}^{\otimes n}$, a randomness rate above
%\begin{align}\label{e_22}
%\inf_{Q_{\sf U}: D(Q_{\sf X}||\pi_{\sf X})\le E} \{\left[D(Q_{\sf X}||\pi_{\sf X})+I(Q_{\sf U},Q_{\sf X|U})-E\right]^+\}
%\end{align}
%is sufficient.
The basic idea for the proof of the tightness of \eqref{e_e} (the converse) is as follows:
given a codebook $\mb{c}$ define $\mathcal{A}:=\bigcup_{l=1}^LT_{Q_{\sf X|U},\delta}(c_l)$, where $T_{Q_{\sf X|U},\delta}(c_l)$ denotes the $Q_{\sf X|U}$-typical sequences given $c_l$. Then it can be shown that when $E$ is less than the right hand side of \eqref{e_e}, it holds that $(P_{{\sf X}^n(\mb{c})}-\gamma Q_{\sf X}^{\otimes n})(\mathcal{A})\to 1$ for some $\delta>0$.

\section{Application to Lossy Source Coding}\label{seclikelihood}
The simplest application of the new resolvability result is to derive a one-shot achievability bound for source coding, which is most fitting in the regime of low rate and exponentially decreasing success probability. The method is applicable to general sources. In the special case of i.i.d.~sources, it recovers the ``success exponent'' in lossy source coding originally derived by the method of types \cite{csiszar1967} for discrete memoryless sources.

\begin{thm}\label{thm_source}
Consider a source with distribution $\pi_X$ and a distortion function $d(\cdot,\cdot)$ on $\mathcal{U}\times\mathcal{X}$. For any distribution $Q_UQ_{X|U}$, $\gamma\ge1$, $d>0$ and integer $L$, there exists a stochastic encoder $\pi_{U|X}$ such that the size of the support of $\pi_U$ is at most $L$ and
\begin{align}
\mathbb{P}[d(\bar{U},\bar{X})\le d]\ge\frac{1}{\gamma}\left(\mathbb{P}[d(U,X)\le d]-\varepsilon\right)\label{e21}
\end{align}
where $(\bar{U},\bar{X})\sim \pi_{UX}$, $(U,X)\sim Q_{UX}$, and $\varepsilon$ is an upper-bound on the right hand side of \eqref{e_ub}.
\end{thm}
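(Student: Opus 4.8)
The plan is to combine the one-shot resolvability bound of Theorem~\ref{thm3} with the change-of-measure inequalities of Proposition~\ref{prop_egamma}. First I would draw a random codebook $\mb{c}=[c_1,\dots,c_L]$ with the $c_l$ i.i.d.\ $\sim Q_U$, and introduce the ``idealized'' joint law $\bar{P}_{U'X(\mb{c})}$ that picks an index $L'$ uniformly on $\{1,\dots,L\}$, sets $U'=c_{L'}$, and generates $X\sim Q_{X|U=c_{L'}}$; its $X$-marginal is exactly the mixture $P_{X(\mb{c})}=\frac1L\sum_{l=1}^{L}Q_{X|U=c_l}$ of Theorem~\ref{thm3}. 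I would then take the encoder $\pi_{U|X}$ to be the posterior kernel $\bar{P}_{U'|X(\mb{c})}$, defined arbitrarily (say as the point mass at $c_1$) on the $P_{X(\mb{c})}$-null set of source symbols. By construction $\pi_{U|X}(\cdot|x)$ is supported on $\{c_1,\dots,c_L\}$, so whatever codebook is eventually fixed the induced $\pi_U$ has support of size at most $L$; moreover $P_{X(\mb{c})}\,\pi_{U|X}=\bar{P}_{U'X(\mb{c})}$ identically.

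Next I would compare on $\mathcal{U}\times\mathcal{X}$ the two joint laws $P:=P_{X(\mb{c})}\,\pi_{U|X}$ and $Q:=\pi_X\,\pi_{U|X}$, which share the common conditional kernel $\pi_{U|X}$. Applying part~1 of Proposition~\ref{prop_egamma} to the event $A=\{(u,x):d(u,x)\le d\}$ gives $Q(A)\ge\frac1\gamma\left(P(A)-E_{\gamma}(P||Q)\right)$, and the equality clause in part~2 of Proposition~\ref{prop_egamma} collapses $E_{\gamma}(P||Q)$ to $E_{\gamma}(P_{X(\mb{c})}||\pi_X)$. Since $P(A)=\bar{P}_{U'X(\mb{c})}(A)=\frac1L\sum_{l=1}^{L}\mathbb{P}[d(c_l,X)\le d\mid X\sim Q_{X|U=c_l}]$ and $Q=\pi_{UX}$, this yields, for every realization of the codebook,
\begin{align*}
\mathbb{P}[d(\bar{U},\bar{X})\le d]\ \ge\ \frac1\gamma\left(\frac1L\sum_{l=1}^{L}\mathbb{P}[d(c_l,X)\le d\mid X\sim Q_{X|U=c_l}]\ -\ E_{\gamma}(P_{X(\mb{c})}||\pi_X)\right).
\end{align*}

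Finally I would average over the random codebook. Because the $c_l$ are i.i.d.\ $\sim Q_U$, the averaged first term inside the parentheses equals $\mathbb{P}[d(U,X)\le d]$ with $(U,X)\sim Q_{UX}$; for the second term, the excess-information metric dominates $E_{\gamma}$ (as $\gamma\ge1$), so its codebook-average is at most the left-hand side of \eqref{e_ub}, hence $\le\varepsilon$ by Theorem~\ref{thm3}. Thus the codebook-average of $\mathbb{P}[d(\bar{U},\bar{X})\le d]$ is at least $\frac1\gamma(\mathbb{P}[d(U,X)\le d]-\varepsilon)$, and fixing a codebook for which this probability attains at least the average, together with its associated encoder $\pi_{U|X}$, establishes \eqref{e21}. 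The step demanding the most care is the bookkeeping in the middle paragraph: one must verify that $P$ and $Q$ genuinely possess a common conditional kernel so that the equality case of part~2 of Proposition~\ref{prop_egamma} is legitimate, and that the arbitrary extension of $\pi_{U|X}$ off the support of $P_{X(\mb{c})}$ is harmless (it is, that set being $P$-null); the remaining estimates are routine given Theorem~\ref{thm3}.
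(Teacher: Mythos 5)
Your proposal is correct and follows essentially the same route as the paper's proof: define the likelihood encoder as the posterior $P_{U|X}$ induced by the codebook mixture $P_{UX}=P_UQ_{X|U}$, use Proposition~\ref{prop_egamma} (parts 1 and 2, via the common kernel $\pi_{U|X}$) to reduce the comparison to $E_\gamma(P_{X(\mb{c})}\|\pi_X)$, average over the random codebook using $\mathbb{E}_{\mb{c}}P_{UX}=Q_{UX}$, bound the $E_\gamma$ term by Theorem~\ref{thm3} (via domination by the excess-information probability), and derandomize by picking a codebook at least as good as the average. The only addition is your explicit handling of the extension of $\pi_{U|X}$ off the $P_{X(\mb{c})}$-support, a measure-theoretic point the paper leaves implicit.
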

\begin{proof}
Given a codebook $(c_1,\dots,c_L)\in\mathcal{U}$,
%define
%\begin{align}
%P_{X}:=\frac{1}{L}\sum_{l=1}^L Q_{X|U=c_l}.
%\end{align}
let $P_U$ be the equiprobable distribution on $(c_1,\dots,c_L)$ and set
\begin{align}
P_{UX}:=Q_{X|U}P_U.
\end{align}
The likelihood encoder is then defined as a random transformation
\begin{align}
\pi_{U|X}:=P_{U|X}
\end{align}
so that the joint distribution of the codeword selected and the source realization $X$ is
\begin{align}
\pi_{UX}=\pi_XP_{U|X}
\end{align}
From Proposition~\ref{prop_egamma} we obtain
\begin{align}
&\quad\gamma\mathbb{P}[d(\bar{U},\bar{X})\le d]
\nonumber
\\
&\ge \mathbb{P}[d(\hat{U},\hat{X})\le d]\nonumber
-E_{\gamma}(P_{XU}||\pi_{XU})
\\
&= \mathbb{P}[d(\hat{U},\hat{X})\le d]
-E_{\gamma}(P_X||\pi_X)
\end{align}
where $(\hat{U},\hat{X})\sim P_{UX}$,
which yields
\begin{align}
&\quad\gamma\mathbb{E}_{\mb{c}}\mathbb{P}[d(\bar{U},\bar{X})\le d]
\nonumber
\\
&\ge \mathbb{P}[d(U,X)\le d]
-\mathbb{E}_{\mb{c}}E_{\gamma}(P_{X}||\pi_{X})\label{e_2s}
\end{align}
where in \eqref{e_2s} we used the fact that $\mathbb{E}_{\mb{c}} P_{UX}=Q_{UX}$.
Finally we can choose a codebook such that $\mathbb{P}[d(\bar{U},\bar{X})\le d]$ is at least its expectation.
\end{proof}

\begin{rem}
In the i.i.d.~setting,
let $R(\pi_{\sf X},d)$ be the rate-distortion function when the source has per-letter distribution $\pi_{\sf X}$. The distortion function for the block is derived from the per-letter distortion by
\begin{align}
d^{(n)}(u^n,x^n):=\frac{1}{n}\sum_{i=1}^n d(u_i,x_i).
\end{align}
Let $(\bar{\sf X}^n,\bar{\sf U}^n)$ be the source-reconstruction pair distributed according to $\pi_{{\sf X}^n{\sf U}^n}$.
If $0\le R<R(\pi_{\sf X},d)$, the maximal probability that the distortion does not exceed $d$ converges to zero with the exponent
\begin{align}
\lim_{n\to\infty}\frac{1}{n}\log\frac{1}{\mathbb{P}[d^{(n)}(\bar{\sf U}^n,\bar{\sf X}^n)\le d]}=G(R,d)
\end{align}
where
\begin{align}\label{e_sexp}
G(R,d):=\min_Q[D(Q||P)+[R(Q,d)-R]^+].
\end{align}
A weaker achievability result than \eqref{e_sexp} was proved in \cite[p168]{omura1975lower}, whereas the final form \eqref{e_sexp} is given in \cite[p158, Ex6]{csiszar2011information} based on method of types. Here we can easily prove the achievability part of \eqref{e_sexp} using Theorem~\ref{thm_source} and Corollary~\ref{cor_asymp} by setting $Q_{\sf X}$ to be the minimizer of \eqref{e_sexp} and $Q_{\sf U|X}$ to be such that
\begin{align}
\mathbb{E}d({\sf U,X})&\le d,
\\
I(Q_{\sf X},Q_{\sf X|U})&\le R.
\end{align}
Then $\gamma_n=\exp(nE)$ with
\begin{align}\label{esuccess}
E>D(Q_X||\pi_X)+[I(U;X)_Q-R]^+,
\end{align}
ensures that
\begin{align}
\mathbb{P}[d^{(n)}(\bar{\sf U}^n,\bar{\sf X}^n)\le d]\ge\frac{1}{2}\exp(-nE)
\end{align}
for $n$ large enough, by the law of large numbers.
\end{rem}
\begin{rem}
Since the $E_{\gamma}$ metric reduces to TV when $\gamma=1$, Theorem~\ref{thm_source} generalizes the likelihood source encoder based on the standard soft-covering/resolvability lemma \cite{song}. In \cite{song}, the error exponent for the likelihood source encoder at rates \emph{above} the rate-distortion function is analyzed using the exponential decay of TV in the approximation of output statistics, and the exponent does not match the optimal exponent in \cite{csiszar1967}. It is also possible to upperbound the success exponent of the TV-based likelihood encoder at rates \emph{below} the rate-distortion function by analyzing the exponential convergence to $2$ of TV in the approximation of output statistics; however that does not yield the optimal exponent \eqref{e_sexp} either. The power of $E_{\gamma}$-resolvability lies in the ability to \emph{convert a large deviation analysis into an excercise of the law of large numbers,} that is, we only care about whether $E_{\gamma}$ converges to $0$, but not the speed, even when dealing with error exponent problems.
\end{rem}

\section{Application to Wiretap Channels}\label{sec_wiretap}
Next we apply the $E_{\gamma}$-resolvability to the wiretap channel $P_{YZ|X}$ as depicted in Figure~\ref{fwiretap}. The receiver and the eavesdropper observe $y\in\mathcal{Y}$ and $z\in\mathcal{Z}$, respectively. Given a codebook $(c_{wl})$, the input to the channel is $c_{wl}$ where $w\in\{1,\dots,M\}$ is the message to be sent and $l$ is equiprobably chosen from $\{1,\dots,L\}$ to randomize the eavesdropper's observation. Moreover, the eavesdropper's observation has the distribution $\pi_Z$ when no message is sent. For general wiretap channels the performance may be enhanced by appending a conditioning channel $Q_{X|U}$ at the output of the encoder \cite{hayashi2006general}. But in that case the same analysis can be carried out for the new wiretap channel $Q_{YZ|U}$. Thus the model in Figure~\ref{fwiretap} entails no loss of generality.
\begin{figure}[h!]
  \centering
\begin{tikzpicture}
[node distance=0.5cm,minimum height=10mm,minimum width=15mm,arw/.style={->,>=stealth'}]
  \node[rectangle,draw] (T) {$P_{YZ|X}$};
  \node[rectangle,draw,rounded corners] (A) [left =of T] {$(c_{wl})$};
  \node[rectangle,draw,rounded corners] (B) [right =of T] {Receiver};
  \node[rectangle,draw,rounded corners] (E) [above =of T] {Eavesdropper};
  \node[rectangle] (M) [left =of A] {};
  \node[rectangle] (Mhat) [right =of B] {};
  \node[rectangle] (L) [above =of A] {};

  \draw [arw] (A) to node[midway,above]{} (T);
  \draw [arw] (T) to node[midway,above]{} (B);
  \draw [arw] (T) to node[midway,above]{} (E);
  \draw [arw] (M) to node[midway,left]{$w$} (A);
  \draw [arw] (B) to node[midway,right]{$\hat{w}$} (Mhat);
  \draw [arw] (L) to node[midway,above]{$l$} (A);
\end{tikzpicture}
\caption{The wiretap channel}
\label{fwiretap}
\end{figure}
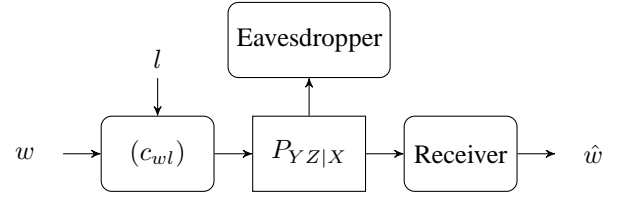
We need the following definitions to quantify the eavesdropper's knowledge.
\begin{defn}
For a fixed codebook we say the eavesdropper can perform $(A,T,\bar{\epsilon})$-decoding if when no message is sent, it detects no message with probability at least $1-A^{-1}$; and when a message $m$ is sent, it can produce a list of $T$ messages containing $m$ with probability at least $\epsilon_m$ such that
\begin{align}
\bar{\epsilon}=\frac{1}{M}\sum_{m=1}^M\epsilon_m.
\end{align}
%where both probabilities are averaged over the random codebook realization.
\end{defn}
For stationary memoryless channels, the quantities $P_{ZY|X}$, $M$ and $L$ in Figure~\ref{fwiretap} are identified as $P_{{\sf Z}^n{\sf Y}^n|{\sf X}^n}$, $\exp(nR)$ and $\exp(nR_{\sf L})$.

We consider an \emph{$(M,L,Q_X)$-random code}, which is defined as the ensemble of the codebook $(c_{wl})$, $w\in\{1,\dots,M\}$, $l\in\{1,\dots,L\}$ where each codeword is i.i.d.~chosen according to $Q_X$.
The following definition captures the asymptotic performance of the eavesdropper:
\begin{defn}\label{def_ach}
Fix $(R,R_{\sf L})$. The rate pair $(\alpha,\tau)$ is $\bar{\epsilon}$-achievable by the eavesdropper if there exist sequences $\{A_n\}$ and $\{T_n\}$ with
\begin{align}
\lim_{n\to\infty}\frac{1}{n}\log T_n&=\tau
\\
\lim_{n\to\infty}\frac{1}{n}\log A_n&=\alpha
\end{align}
such that for sufficiently large $n$, the eavesdropper can achieve $(A_n,T_n,\bar{\epsilon})$-decoding with high probability when the codebook is the $(\exp(nR),\exp(nR_{\sf L}),Q_{\sf X}^{\otimes n})$-random code.
\end{defn}
Then we have the following result:
\begin{thm}\label{region}
For any $Q_{\sf X}$, $R$, $R_{\sf L}$ and $0<\bar{\epsilon}<1$,
the pair $(\alpha,\tau)$ is $\bar{\epsilon}$-achievable by the eavesdropper in the sense of Definition~\ref{def_ach} iff
\begin{align}\label{e_reg}
\left\{
\begin{array}{l}
  \alpha\le D(Q_{\sf Z}||\pi_{\sf Z})+[I(Q_{\sf X},P_{\sf Z|X})-R-R_{\sf L}]^+ \\
  \tau \ge R-[I(Q_{\sf X},P_{\sf Z|X})-R_{\sf L}]^+
  \end{array}
\right.
\end{align}
where $Q_{\sf X}\to P_{\sf Z|X}\to Q_{\sf Z}$.
\end{thm}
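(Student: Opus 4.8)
The plan is to establish the two scalar bounds of \eqref{e_reg} separately, and each in an achievability (the eavesdropper can do it) and a converse (it cannot do better) direction, using Corollary~\ref{cor_asymp} as the engine. Apply that corollary with the roles of its ${\sf U},{\sf X}$ played by the channel input ${\sf X}$ and the eavesdropper output ${\sf Z}$, its channel $Q_{\sf X|U}$ replaced by $P_{\sf Z|X}$, its target $\pi_{\sf X}$ by $\pi_{\sf Z}$, and its number of codewords $L$ by $ML=\exp(n(R+R_{\sf L}))$; writing $I:=I(Q_{\sf X},P_{\sf Z|X})$, the exponent in \eqref{e_e} is then exactly $\alpha^\ast:=D(Q_{\sf Z}\|\pi_{\sf Z})+[I-R-R_{\sf L}]^+$, the right side of the first line of \eqref{e_reg}, and we abbreviate the right side of the second line by $\tau^\ast:=R-[I-R_{\sf L}]^+$. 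Write $P^{(m)}:=\frac1L\sum_l P_{{\sf Z}^n|{\sf X}^n=c_{ml}}$ and $P_{{\sf Z}^n(\mb{c})}:=\frac1M\sum_m P^{(m)}=\frac1{ML}\sum_{w,l}P_{{\sf Z}^n|{\sf X}^n=c_{wl}}$, so the eavesdropper observes $\pi_{\sf Z}^{\otimes n}$ under no message, $P^{(m)}$ under message $m$, and $P_{{\sf Z}^n(\mb{c})}$ under a uniformly random message.

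\emph{Achievability.} I would use a single procedure. On observing $z^n$, form the candidate list $\Phi(z^n):=\{w:\exists\,l,\ (c_{wl},z^n)\in\mathcal T_n\}$ with $\mathcal T_n$ a joint $\delta$-typical set for $Q_{\sf X}P_{\sf Z|X}$; declare ``no message'' if $r_{\mb{c}}(z^n):=\frac{{\rm d}P_{{\sf Z}^n(\mb{c})}}{{\rm d}\pi_{\sf Z}^{\otimes n}}(z^n)<A_n:=\exp(n\alpha)$ for a fixed $\alpha<\alpha^\ast$, and otherwise output $\Phi(z^n)$. Under $\pi_{\sf Z}^{\otimes n}$, Markov's inequality with $\mathbb E_{\pi_{\sf Z}^{\otimes n}}[r_{\mb{c}}]\le1$ gives $\mathbb P_{\pi_{\sf Z}^{\otimes n}}[r_{\mb{c}}\ge A_n]\le A_n^{-1}$, so ``no message'' is declared with probability $\ge1-A_n^{-1}$. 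When message $m$ is sent: joint typicality of the transmitted $(c_{ml^\ast},{\sf Z}^n)$ puts $m\in\Phi({\sf Z}^n)$ with probability $\to1$ for the random codebook; the event $\{r_{\mb{c}}({\sf Z}^n)\ge A_n\}$ has $m$-average probability $P_{{\sf Z}^n(\mb{c})}[r_{\mb{c}}\ge A_n]$, which $\to1$ because $\alpha<\alpha^\ast$ and the tightness half of Corollary~\ref{cor_asymp} forces $E_{\exp(nE)}(P_{{\sf Z}^n(\mb{c})}\|\pi_{\sf Z}^{\otimes n})\to1$, hence $\mathbb P_{P_{{\sf Z}^n(\mb{c})}}[r_{\mb{c}}>\exp(nE)]\to1$, for any $E\in(\alpha,\alpha^\ast)$; and the list size obeys $\mathbb E_{\mb{c}}\mathbb E_{{\sf Z}^n\sim P_{{\sf Z}^n(\mb{c})}}|\Phi({\sf Z}^n)|\le\min\{M,\ 1+ML\cdot\mathbb P[({\sf X}^n,\tilde{\sf Z}^n)\in\mathcal T_n]\}\doteq\exp(n\tau^\ast)$ by the standard count $\mathbb P[({\sf X}^n,\tilde{\sf Z}^n)\in\mathcal T_n]\doteq\exp(-nI)$ for independent $({\sf X}^n,\tilde{\sf Z}^n)\sim Q_{\sf X}^{\otimes n}\times Q_{\sf Z}^{\otimes n}$, so Markov's inequality permits $T_n=\exp(n\tau)$ for any fixed $\tau>\tau^\ast$. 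Hence for the random codebook with probability $\to1$ the three failure events---declaring ``no message'', missing $m$, overflowing $T_n$---have vanishing $m$-average probability, giving $\bar{\epsilon}=\frac1M\sum_m\epsilon_m\to1$, which exceeds any prescribed value in $(0,1)$; the boundary $\alpha=\alpha^\ast$, $\tau=\tau^\ast$ follows by a routine limiting/diagonal argument.

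\emph{Converse for $\alpha$.} Given any achieving eavesdropper, let $\mathcal D_n$ be the set of observations on which a list (rather than ``no message'') is produced. Detecting ``no message'' forces $\pi_{\sf Z}^{\otimes n}(\mathcal D_n)\le A_n^{-1}$, and since $\{m\in\text{list}\}\subseteq\{{\sf Z}^n\in\mathcal D_n\}$ we get $P^{(m)}(\mathcal D_n)\ge\epsilon_m$ and hence $P_{{\sf Z}^n(\mb{c})}(\mathcal D_n)=\frac1M\sum_m P^{(m)}(\mathcal D_n)\ge\bar{\epsilon}$. If $\alpha>\alpha^\ast$, pick $E\in(\alpha^\ast,\alpha)$; the achievability half of Corollary~\ref{cor_asymp} gives $E_{\exp(nE)}(P_{{\sf Z}^n(\mb{c})}\|\pi_{\sf Z}^{\otimes n})\to0$ for the random codebook with probability $\to1$, so Proposition~\ref{prop_egamma}(1) yields $A_n^{-1}\ge\pi_{\sf Z}^{\otimes n}(\mathcal D_n)\ge\exp(-nE)\big(P_{{\sf Z}^n(\mb{c})}(\mathcal D_n)-o(1)\big)\ge\exp(-nE)(\bar{\epsilon}-o(1))$, i.e.\ $\exp(n(E-\alpha)+o(n))\ge\bar{\epsilon}-o(1)$, contradicting $E<\alpha$ and $\bar{\epsilon}>0$. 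Thus $\alpha\le\alpha^\ast$.

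\emph{Converse for $\tau$ (the main obstacle).} This is a strong converse for list decoding of the ``channel'' $w\mapsto{\sf Z}^n$ that picks $l$ uniformly and transmits $c_{wl}$. For any list decoder with $|\Phi|\le T_n$,
\[
\bar{\epsilon}=\tfrac1M\sum_m P^{(m)}\big(\{m\in\Phi({\sf Z}^n)\}\big)\le\mathbb E_{{\sf Z}^n\sim P_{{\sf Z}^n(\mb{c})}}\Big[\min\Big\{1,\ T_n\max_m\tfrac{P^{(m)}({\sf Z}^n)}{\sum_{m'}P^{(m')}({\sf Z}^n)}\Big\}\Big],
\]
using $\sum_{m'}P^{(m')}=M\,P_{{\sf Z}^n(\mb{c})}$. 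It therefore suffices to show that for the random codebook with probability $\to1$ there is a set of ${\sf Z}^n$ of $P_{{\sf Z}^n(\mb{c})}$-probability $\to1$ on which $\tfrac{\max_m P^{(m)}({\sf Z}^n)}{\sum_{m'}P^{(m')}({\sf Z}^n)}\le\exp(-n(\tau^\ast-o(1)))$---no single message dominates the message posterior---whence $\bar{\epsilon}\le T_n\exp(-n(\tau^\ast-o(1)))+o(1)$ and $\tau\ge\tau^\ast$; only the regime $R+R_{\sf L}>I$, equivalently $\tau^\ast>0$, needs this, the rest being trivial since $\tau\ge0$. The ``no message dominates'' estimate is a concentration statement about the i.i.d.\ codebook: on the $P_{\sf Z|X}$-conditionally typical ${\sf Z}^n$ each $P_{{\sf Z}^n|{\sf X}^n=c_{wl}}({\sf Z}^n)\doteq\exp(-nH(Q_{\sf Z|X}|Q_{\sf X}))$, a Chernoff bound on $\mathrm{Binomial}(L,\exp(-nI))$ caps the number of codewords of a single message jointly typical with ${\sf Z}^n$ at $\exp(n[R_{\sf L}-I]^++o(n))$, a Chernoff bound on $\mathrm{Binomial}(ML,\exp(-nI))$ caps the total at $\exp(n[R+R_{\sf L}-I]^++o(n))$, and the ratio of the two is $\exp(-n(\tau^\ast-o(1)))$. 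The delicate part---where I expect the real work to lie---is making these counts hold \emph{simultaneously} over all $M$ messages and all typical ${\sf Z}^n$ (so that the union-bound exponents are absorbed into $o(n)$ inside the Chernoff tails), together with bounding the contribution of atypical ${\sf Z}^n$, which is at most their $P_{{\sf Z}^n(\mb{c})}$-probability and hence $o(1)$. I do not see a route through the $E_\gamma$-resolvability alone here: what is needed is that the posterior is \emph{flat} on its support, not merely that it is spread out, which is exactly the gap between the list-size characterization above and the equivocation of \cite{wyner1975wire} referred to after the theorem statement.
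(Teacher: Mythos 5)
Your converse for $\alpha$ reproduces the paper's argument: it is exactly the inequality \eqref{e_45} of Theorem~\ref{thmlem}, applied with $\gamma=\exp(nE)$ and the $ML$-codeword instance of Corollary~\ref{cor_asymp} (target $\pi_{\sf Z}^{\otimes n}$, exponent $D(Q_{\sf Z}\|\pi_{\sf Z})+[I-R-R_{\sf L}]^+$). Your achievability sketch is also along plausible lines (the paper does not exhibit its achievability argument, only remarking that it is a case analysis), and one can quibble only about the details of holding all failure events uniformly over the random codebook.

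The genuine gap is in your converse for $\tau$, where you replace the paper's resolvability argument by a direct strong-converse/Chernoff computation that you yourself describe as incomplete, and you state that you ``do not see a route through the $E_\gamma$-resolvability alone here'' because you believe you need the posterior of $W$ to be flat. That belief is the error. The paper's route is \eqref{e_46} of Theorem~\ref{thmlem}: fix an \emph{arbitrary} reference measure $\mu_Z$, let $B_m$ be the set of observations on which the output list contains $m$, note that $\sum_m \mathbf 1\{z\in B_m\}$ is exactly the list size $T(z)$ on $\mathcal D_0^c$ (and $0$ on $\mathcal D_0$), and apply Proposition~\ref{prop_egamma}(1) \emph{per message}:
\begin{align}
\mu_Z(B_m)\ \ge\ \tfrac1\gamma\bigl(P_{Z|W=m}(B_m)-E_\gamma(P_{Z|W=m}\|\mu_Z)\bigr).
\end{align}
Summing over $m$ gives $T\,\mu_Z(\mathcal D_0^c)=\sum_m\mu_Z(B_m)\ge \tfrac{M}{\gamma}\bigl(1-\bar\epsilon-\tfrac1M\sum_m E_\gamma(P_{Z|W=m}\|\mu_Z)\bigr)$, which is \eqref{e_46} once one uses $\mu_Z(\mathcal D_0^c)\le A^{-1}$. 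Nowhere is anything required of the shape of the posterior $m\mapsto P^{(m)}(z)/\sum_{m'}P^{(m')}(z)$; the point is that you must \emph{not} compare $P^{(m)}$ against the mixture $\tfrac1M\sum_{m'}P^{(m')}$ as you did, but against a message-independent reference measure, so the $M$ contributions decouple additively. Now choose $\mu_Z=Q_{\sf Z}^{\otimes n}$ (not $\pi_{\sf Z}^{\otimes n}$), so that $D(Q_{\sf Z}\|\mu_{\sf Z})=0$ and each $P_{Z|W=m}$ is itself an $L$-codeword soft-covering mixture; Corollary~\ref{cor_asymp} with $L=\exp(nR_{\sf L})$ codewords and target $Q_{\sf Z}^{\otimes n}$ gives, for $\gamma=\exp(nE')$ with $E'>[I-R_{\sf L}]^+$, that $\frac1M\sum_m E_\gamma(P_{Z|W=m}\|Q_{\sf Z}^{\otimes n})\to 0$ in probability over the codebook. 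Then \eqref{e_46} forces $T_n\ge T\ge \tfrac{M}{\gamma}(1-\bar\epsilon-o(1))$, i.e.\ $\tau\ge R-E'$, and letting $E'\downarrow[I-R_{\sf L}]^+$ gives $\tau\ge\tau^\ast$. This is precisely the step that makes $E_\gamma$-resolvability deliver the $\tau$ converse without any flatness hypothesis, and it is the key idea your proposal is missing.
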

\begin{rem}
From the noisy channel coding theorem, the supremum randomization rate $R_{\sf L}$ such that the sender can reliably transmit messages at the rate $R$ is $I(Q_{\sf X},P_{\sf Y|X})-R$. The larger $R_{\sf L}$ the less reliably the eavesdropper can decode, so the optimal encoder chooses $R_{\sf L}$ as close to this supremum as possible. Thus Theorem~\ref{region} implies that to reliably transmit messages at the rate $R$, codebooks can be selected such that the eavesdropper cannot perform $(\exp(n\alpha),\exp(n\tau),\bar{\epsilon})$ for large $n$ if there exists some $Q_{\sf X}$ such that
\begin{align}\label{e_reg1}
\alpha>D(Q_{\sf Z}||\pi_{\sf Z})+[I(Q_{\sf X},P_{\sf Z|X})-I(Q_{\sf X},P_{\sf Y|X})]^+
\end{align}
or
\begin{align}\label{e_reg2}
\tau<R-[I(Q_{\sf X},P_{\sf Z|X})-I(Q_{\sf X},P_{\sf Y|X})+R]^+.
\end{align}
\end{rem}
\begin{rem}
In general the sender-receiver want to minimize $\alpha$ and maximize $\tau$ obeying the tradeoff \eqref{e_reg1}, \eqref{e_reg2} by selecting $Q_{\sf X}$. In the special case where $\alpha$ has no importance and $R$ is larger than the secrecy capacity $C:=\sup_{Q_{\sf X}}\{I(Q_{\sf X},P_{\sf Y|X})-I(Q_{\sf X},P_{\sf Z|X})\}$, we see from \eqref{e_reg2} that the supremum $\tau$ is $C$. The formula is the same as the equivocation measure defined as $\frac{1}{n}H(W|{\sf Z}^n)$ \cite{wyner1975wire}, but technically our result does not follow directly from the lower bound on equivocation, since it may be possible that the a posterior distribution of $W$ is concentrated on a small list but has a tail spread over an exponentially large set, resulting a large equivocation.
\end{rem}
The (eavesdropper) achievability part of Theorem~\ref{region} follows by analyzing the eavesdropper decoding ability for different cases of the rates $(R,R_{\sf L})$. The (eavesdropper) converse part of Theorem~\ref{region} follows by applying the following non-asymptotic bounds to different cases of $(R,R_{\sf L})$ and invoking Corollary~\ref{cor_asymp}.
\begin{thm}\label{thmlem}
In the wiretap channel, fix an arbitrary distribution $\mu_Z$ and a measurable subset $\mathcal{D}_0\subseteq\mathcal{Z}$.
Suppose the eavesdropper can either detect that no message is sent upon observing $z\in\mathcal{D}_0$ with
\begin{align}
\mu_Z(\mathcal{D}_0)\ge 1-A^{-1}
\end{align}
or outputs a list of $T(z)$ messages upon observing $z\notin \mathcal{D}_0$ that contains the actual message $m\in\{1,\dots,M\}$ with probability at least $1-\epsilon_m$. Define the average quantities
\begin{align}
T&:=\frac{1}{\mu_Z(\mathcal{D}_0^c)}\int_{\mathcal{D}_0^c} T(z){\rm d}\mu_Z(z),
\\
\bar{\epsilon}&:=\frac{1}{M}\sum_{m=1}^M \epsilon_m.
\end{align}
Then,
\begin{align}\label{e_45}
\frac{1}{A}\ge\frac{1}{\gamma}\left(1-\bar{\epsilon}-E_{\gamma}(P_Z||\pi_Z)\right),
\end{align}
where we recall that $\pi_Z$ is the non-message distribution, and
\begin{align}\label{e_46}
\frac{T}{MA}\ge\frac{1}{\gamma}\left(1-\bar{\epsilon}-\frac{1}{M}\sum_{m=1}^M E_{\gamma}(P_{Z|W=m}||\mu_Z)\right).
\end{align}
\end{thm}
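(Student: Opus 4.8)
The plan is to reduce both inequalities to applications of item~1 of Proposition~\ref{prop_egamma}, which converts a lower bound on $P(A)$ under the ``good'' measure into a lower bound on $Q(A)$ under the ``reference'' measure at the cost of a factor $1/\gamma$. The two inequalities have the same flavor: \eqref{e_45} concerns the eavesdropper's ability to recognize the non-message event, and \eqref{e_46} concerns the list-decoding event; in each case we identify an event whose probability under the \emph{output distribution induced by the code} (i.e.\ $P_Z$ or $P_{Z|W=m}$) is at least $1-\bar\epsilon$ (up to averaging), and then transfer this to the \emph{synthetic reference} ($\pi_Z$ or $\mu_Z$) via Proposition~\ref{prop_egamma}(1).

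First I would prove \eqref{e_45}. Set up the hypothesis test the eavesdropper faces: under $\pi_Z$ (no message) the decision region $\mathcal{D}_0$ has $\mu_Z$-mass (hence by the assumed closeness $\pi_Z$-mass, once we pay the $E_\gamma$ penalty) at least $1-A^{-1}$; under the message-carrying distribution $P_Z = \frac1M\sum_m P_{Z|W=m}$ the complement $\mathcal{D}_0^c$ must be large because each $\epsilon_m$-reliable list decoder forces $z \in \mathcal{D}_0^c$ with probability $\ge 1-\epsilon_m$ (otherwise no list can be produced), so $P_Z(\mathcal{D}_0^c) \ge 1-\bar\epsilon$, i.e.\ $P_Z(\mathcal{D}_0) \le \bar\epsilon$. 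Now apply Proposition~\ref{prop_egamma}(1) with $P \leftarrow P_Z$, $Q \leftarrow \pi_Z$, $A \leftarrow \mathcal{D}_0^c$:
\begin{align}
\pi_Z(\mathcal{D}_0^c) \ge \frac{1}{\gamma}\bigl(P_Z(\mathcal{D}_0^c) - E_\gamma(P_Z\|\pi_Z)\bigr) \ge \frac{1}{\gamma}\bigl(1-\bar\epsilon - E_\gamma(P_Z\|\pi_Z)\bigr).
\end{align}
Since $\mu_Z(\mathcal{D}_0) \ge 1-A^{-1}$ we have $\mu_Z(\mathcal{D}_0^c) \le A^{-1}$; the only remaining point is to relate $\mu_Z(\mathcal{D}_0^c)$ to $\pi_Z(\mathcal{D}_0^c)$. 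In fact $\mu_Z$ is arbitrary and the cleanest route is to note that $E_\gamma$ already absorbs this: one may take $\mu_Z$ to be a free parameter that is optimized away, or observe that the statement as written uses $\mu_Z(\mathcal{D}_0)\ge 1-A^{-1}$ only through $A^{-1}$, so identifying $1/A \ge \mu_Z(\mathcal{D}_0^c)$ and using that the eavesdropper's false-alarm probability under the \emph{true} non-message distribution $\pi_Z$ is what matters, we get $1/A \ge \pi_Z(\mathcal{D}_0^c) \ge \frac1\gamma(1-\bar\epsilon - E_\gamma(P_Z\|\pi_Z))$. This is \eqref{e_45}.

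For \eqref{e_46} the argument is the conditional analogue. Fix a message $m$; on the event $z\in\mathcal{D}_0^c$ the eavesdropper outputs a list $\mathcal{L}(z)$ of size $T(z)$ with $m\in\mathcal{L}(z)$ with probability $\ge 1-\epsilon_m$ under $P_{Z|W=m}$. The key combinatorial step is a union/counting bound: averaging the indicator $\mathbf{1}\{m\in\mathcal{L}(z)\}$ over $m$ and integrating against $\mu_Z$ on $\mathcal{D}_0^c$ gives $\frac1M\sum_m \mu_Z(\{z\in\mathcal{D}_0^c: m\in\mathcal{L}(z)\}) = \frac1M\int_{\mathcal{D}_0^c}|\mathcal{L}(z)|\,\mathrm{d}\mu_Z = \frac{T\,\mu_Z(\mathcal{D}_0^c)}{M} \le \frac{T}{MA}$, using the definition of $T$ and $\mu_Z(\mathcal{D}_0^c)\le A^{-1}$. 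On the other hand, applying Proposition~\ref{prop_egamma}(1) with $P\leftarrow P_{Z|W=m}$, $Q\leftarrow\mu_Z$, $A\leftarrow\{z: m\in\mathcal{L}(z)\}\cap\mathcal{D}_0^c$, whose $P_{Z|W=m}$-probability is $\ge 1-\epsilon_m$ (detection plus correct list), yields $\mu_Z(\{z\in\mathcal{D}_0^c: m\in\mathcal{L}(z)\}) \ge \frac1\gamma(1-\epsilon_m - E_\gamma(P_{Z|W=m}\|\mu_Z))$. Averaging over $m$ and chaining the two displays gives
\begin{align}
\frac{T}{MA} \ge \frac{1}{\gamma}\Bigl(1-\bar\epsilon - \frac1M\sum_{m=1}^M E_\gamma(P_{Z|W=m}\|\mu_Z)\Bigr),
\end{align}
which is \eqref{e_46}.

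The main obstacle is bookkeeping the two coupled events — the non-message detection region $\mathcal{D}_0$ and the list-membership regions — against the \emph{right} reference measure in each inequality (the true non-message law $\pi_Z$ in the first, the arbitrary $\mu_Z$ in the second), making sure the $E_\gamma$ data-processing/decomposition from Proposition~\ref{prop_egamma} is invoked on the correct pair and that the averaging over $m$ commutes with the integration defining $T$. Once the events are chosen correctly, each step is a one-line invocation of Proposition~\ref{prop_egamma}(1) plus a counting identity, so no large-deviation estimate is needed here; the asymptotic tightness is deferred to the combination with Corollary~\ref{cor_asymp} as indicated in the text.
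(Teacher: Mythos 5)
Your proof is correct and is the intended argument: both bounds are one-line applications of Proposition~\ref{prop_egamma}(1), applied to the region $\mathcal{D}_0^c$ with reference measure $\pi_Z$ for \eqref{e_45} (using $P_Z(\mathcal{D}_0^c)\ge 1-\bar\epsilon$, forced by the list-decoding guarantee), and to the sets $\{z\in\mathcal{D}_0^c:\, m\in\mathcal{L}(z)\}$ with reference measure $\mu_Z$ for \eqref{e_46}, combined with the counting identity $\sum_{m}\mu_Z(\{z\in\mathcal{D}_0^c:\, m\in\mathcal{L}(z)\})=\int_{\mathcal{D}_0^c}T(z)\,{\rm d}\mu_Z=T\,\mu_Z(\mathcal{D}_0^c)\le T/A$. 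Your side remark on \eqref{e_45} is also the right resolution of the statement's ambiguity: the detection guarantee must there be read under the true non-message law, i.e.\ $\pi_Z(\mathcal{D}_0)\ge 1-A^{-1}$ (an arbitrary $\mu_Z$ concentrated on $\mathcal{D}_0$ would otherwise falsify \eqref{e_45}), whereas \eqref{e_46} uses exactly the stated $\mu_Z(\mathcal{D}_0^c)\le A^{-1}$.
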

From the eavesdropper viewpoint, a larger $A$ and a smaller $T$ is more desirable since it will then be able to find out that no message is sent with smaller error probability or narrow down to a smaller list when a message is sent. This observation agrees with \eqref{e_45} and \eqref{e_46}: a smaller $\gamma$ implies a higher degree of approximation, and hence higher indistinguishability of output distributions which is to the eavesdropper disadvantage.

\section{Discussion}
As we have demonstrated, the achievability part of resolvability in $E_\gamma$ has various applications in information theory, especially for bounding rare event probabilities. (c.f.~\eqref{e21}\eqref{e_45} and \eqref{e_46}). However the asymmetry of $E_{\gamma}$ (when $\gamma>1$) places a limitation on $E_{\gamma}$-resolvability in certain problems. In particular, there is no counterpart of Theorem~\ref{thm3} for $E_{\gamma}(\pi_X||P_X)$.

\section*{Acknowledgment}
Our initial focus was on the excess information metric for resolvability, as in Theorem~\ref{thm3}. We gratefully acknowledge Yury Polyanskiy for bringing the $E_{\gamma}$ metric to our sight and showing us the useful properties of this metric.
This work was supported by NSF under Grants CCF-1350595, CCF-1116013, CCF-1319299, CCF-1319304, and the Air Force Office of Scientific Research under Grant FA9550-12-1-0196.

\bibliographystyle{ieeetr}
\bibliography{SM}
\end{document}